\title{A note on quantum lower bounds for local search via congestion and expansion\footnote{Alphabetical author ordering.  Research  supported  by the US National Science Foundation under CAREER grant CCF-2238372.}}
\author{Simina Br\^anzei\footnote{Purdue University. E-mail: simina.branzei@gmail.com.}
	\and 
	Nicholas J. Recker\footnote{Purdue University. E-mail: nrecker@umich.edu.}
}
\date{} 
\begin{document}

\maketitle 

\begin{abstract}
We consider the quantum query complexity of local search as a function of graph geometry.  Given a graph $G = (V,E)$ with $n$ vertices and black box access to a function $f : V \to \mathbb{R}$, the goal is find a vertex $v$ that is a local minimum, i.e. with $f(v) \leq f(u)$ for all $(u,v) \in E$, using as few oracle queries as possible.

We show that  the quantum query complexity of local search on $G$ is $\Omega\bigl( \frac{n^{\frac{3}{4}}}{\sqrt{g}} \bigr)$, where $g$ is the vertex congestion of the graph. For a $\beta$-expander with maximum degree $\Delta$, this implies a lower bound of $ \Omega\bigl(\frac{\sqrt{\beta} \; n^{\frac{1}{4}}}{\sqrt{\Delta} \; \log{n}} \bigr)$. We obtain these bounds by applying the strong weighted adversary method to a construction from  \cite{branzei2024sharp}.

As a corollary, on constant degree expanders, we derive a lower bound of $\Omega\bigl(\frac{n^{\frac{1}{4}}}{ \sqrt{\log{n}}} \bigr)$.  This improves upon the best prior quantum lower bound of $\Omega\bigl( \frac{n^{\frac{1}{8}}}{\log{n}}\bigr) $  by \cite{santha2004quantum}.
In contrast to the classical setting,  a gap remains in the quantum case between our lower bound and the best-known upper bound of  $O\bigl( n^{\frac{1}{3}} \bigr)$ for such graphs.
%
\end{abstract}

\section{Introduction}

We analyze the quantum query complexity of local search as a function of the geometry of the underlying space. Specifically, we are given  a graph $G=(V,E)$ and oracle access to a function $f : V \to R$. The objective is to identify a vertex $v$ that represents a local minimum, meaning $f(v) \leq f(u)$  for every edge $(u,v) \in E$, while minimizing the number of vertices queried.

The  randomized query complexity of local search was first studied by \cite{aldous1983minimization}, who showed a lower bound of $\Omega(2^{n/2-o(n)})$ for the Boolean hypercube $\{0,1\}^n$. \cite{Aaronson06} obtained improved randomized lower bounds for the Boolean hypercube and $d$-dimensional grid by designing a classical variant of the relational adversary method from quantum computing.   \cite{Aaronson06} also showed quantum lower bounds, in particular proving that the quantum query complexity of local search on the Boolean hypercube $\{0,1\}^n$ is  $\Omega\left(2^{n/4}/n \right)$. A quantum upper bound of  $O\left(2^{n/3}n^{1/6} \right)$ was also given in \cite{Aaronson06}.

\cite{zhang2009tight} further improved these results, showing that  the quantum query complexity on the Boolean hypercube $\{0,1\}^n$ is $\Theta\left( 2^{n/3} n^{1/6}\right)$ and on the $d$-dimensional grid $[n]^d$ is $\Theta\left( n^{d/3} \right)$ for $d \geq 6$. \cite{sun2009quantum} analyzed the low dimensional regime for the grid, showing that quantum query complexity on the 2-dimensional grid $[n]^2$ is $\Omega \left( {n^{1/2 - \delta } } \right)$, and on the 3-dimensional grid $[n]^3$ is $\Omega (n^{1- \delta } )$, for any fixed $\delta \ge 0$.

\cite{santha2004quantum} showed a quantum lower bound as a function of separation number: for every graph $G$ on $n$ vertices, the quantum query complexity of local search on $G$ is $\Omega\left(\sqrt[8]{\frac{s}{\Delta}} /\log{n}\right)$, where $s$ is the separation number of the graph and $\Delta$ its maximum degree. \cite{Verhoeven06}  showed a general upper bound of $O(\sqrt{\Delta})+O(\sqrt[4]{\gamma})\cdot\sqrt[4]{n}\log\log n$, where $\gamma$ is the genus of the graph.

Our contribution is to show quantum lower bounds for every graph $G$ as a function of  features such as vertex congestion and expansion. We obtain these results via the strong weighted adversary method applied to a construction from \cite{branzei2024sharp}.



\section{Model}

Let $G = (V,E)$ be a connected undirected graph and $f : V \to \mathbb{R}$ a function defined on the vertices.  
A vertex $v \in V$  is a local minimum if $f(v) \leq f(u)$ for every edge $(u,v) \in E$.
We  write $V = [n] = \{1, \ldots, n\}$.

Given as input a graph $G$ and oracle access to function $f$, the local search problem is to find a local minimum of $f$ on $G$ using as few queries as possible. {Each query is of the form: ``Given a vertex $v$, what is $f(v)$?''}.

\paragraph{Quantum query model.}

In the quantum query model, an algorithm has a working state $|\phi \rangle$ of the form $|\phi \rangle = \sum_{v, a, z} \alpha_{v, a, z} | v, a, z \rangle $, where $v$ is the label of a vertex in $G$, $a$ is a string representing the answer register,  $z$ is a string representing the workplace register, and $\alpha_{v, a, z}$ is a complex amplitude. 
The amplitudes  satisfy the constraint $\sum_{v, a, z} |\alpha_{v, a, z}|^2 = 1$.

Starting from an arbitrary fixed initial state $|\phi_0\rangle$, the algorithm works as an alternating sequence of \emph{query} and \emph{algorithm} steps. If the current state is $| \phi \rangle = \sum_{v,a,z} \alpha_{v,a,z} \left | v, a, z\rangle \right. $, a query transforms it  as follows
\begin{align} \label{eq:def_U_x}
    \sum_{v,a,z} \alpha_{v,a,z} \left | v, a, z\rangle \right. \rightarrow \sum_{v,a,z} \alpha_{v,a,z} | v, a \oplus f(v), z\rangle,
\end{align}
where $\oplus$ denotes the bitwise exclusive OR operation.
An algorithm step multiplies the vector of $\alpha_{v,a,z}$'s by a unitary matrix that does not depend on $f$.
Thus a $T$-query quantum query algorithm is a sequence of operations
\begin{align}
    U_0 \rightarrow Q \rightarrow U_1 \rightarrow Q \rightarrow \ldots \rightarrow  U_{T-1} \rightarrow Q \rightarrow U_T, 
\end{align}
where $Q$ is the oracle gate defined in \eqref{eq:def_U_x} and $U_0, U_1, \ldots, U_T$ are arbitrary  unitary operations that are independent of the input function $f$.

Let $\mathcal{L}_f$ denote the set of local minima of $f$. The algorithm is said to succeed if at the end it gives a correct answer with probability at least $2/3$, that is:
\begin{align} 
\sum_{v,a,z: v \in \mathcal{L}_f} |\alpha_{v,a,z}|^2 \geq 2/3\,. \notag 
\end{align}
The bounded error quantum query complexity $Q(G)$ is defined as the minimum number of queries used by a quantum algorithm that succeeds on every input function $f$ for  the graph $G$.

\paragraph{Congestion.}
Let $\cP = \{P^{u,v}\}_{u,v \in V}$  be an all-pairs set of paths in $G$, where $P^{u,v}$ is a path from $u$ to $v$.  {For convenience, we assume $P^{u,u} = (u)$  for all $u \in V$; our results will  hold even if  $P^{u,u} = \emptyset$.}

For a path $Q = (v_1, \ldots, v_s)$ in $G$, let $c^Q_v$ be the number of times a vertex $v \in V$ appears in $Q$ and $c^Q_e$ the number of times an edge $e \in E$ appears in $Q$.
The \emph{vertex congestion} of the set of paths $\cP$ is $\max_{v \in V} \sum_{Q \in \cP} c_v^Q$, while the  \emph{edge congestion} of $\cP$ is $\max_{e \in E} \sum_{Q \in \cP} c_e^Q$.

The \emph{vertex congestion of $G$} is the smallest integer $g$ for which the graph has an all-pairs set of paths $\mathcal{P}$ with vertex  congestion $g$. Clearly, $g \geq n$ since each vertex belongs to at least $n$ paths in $\mathcal{P}$ and $g \leq n^2$ since each vertex appears at most once on each path and there are $n^2$ paths in $\mathcal{P}$. 
 The \emph{edge congestion} $g_e$ is similarly defined, but with respect to the edge congestion of a  set of paths $\mathcal{P}$.

\paragraph{Expansion.}
For each set of vertices $S \subseteq V$, the edges with one endpoint in $S$ and another in $V \setminus S$ are called \emph{cut edges} and denoted 
\begin{align}
E(S, V \setminus S) = \bigl \{ (u,v) \in E \mid  u \in S, v \not\in S \bigr \} \,. \notag 
\end{align} 
The graph is a $\beta$-expander if $|E(S, V \setminus S)| \geq \beta \cdot |S|$, for all $S \subseteq V$ with $0 < |S| \leq n/2$ (see, e.g. \cite{alon04}).
The graph is \emph{$\Delta$-regular} if each vertex has degree $\Delta$.

\section{Our results}

Our main result is the following theorem, which provides a lower bound on the quantum query complexity of local search for any graph in terms of its congestion.
\begin{restatable}{mytheorem}{mainresult} \label{thm:main_result}
Let $G = (V, E)$ be a connected undirected graph with $n$ vertices. 
    Then the quantum query complexity of local search on $G$ is $\Omega\left(\frac{n^{\frac{3}{4}}}{\sqrt{g}}\right)$, where $g$ is the vertex congestion of the graph.
\end{restatable}

This bound is the square root of the corresponding bound of $\Omega\left( n^{1.5}/g \right)$ on the \emph{classical} randomized query complexity of local search found by \cite{branzei2024sharp}. We obtain \cref{thm:main_result}  by invoking the strong weighted adversary method on a similar construction to the one in  \cite{branzei2024sharp}.

If a graph with $n$ vertices has expansion $\beta$ and maximum degree $\Delta$, then its  vertex congestion $g$ satisfies the following inequality (see, e.g., Lemma 16 in \cite{branzei2024sharp}):
\begin{align} \label{eq:congestion_inequality}
g \in O\left( n \log^2(n) \cdot \frac{\Delta}{\beta}\right) \,.
\end{align}
    
Using \eqref{eq:congestion_inequality} in   \cref{thm:main_result} implies the following:

\begin{corollary} \label{thm:expansion}
Let $G = (V,E)$ be a connected undirected $\beta$-expander with $n$ vertices and maximum degree $\Delta$.
Then the quantum query complexity of local search on $G$ is $ \Omega\left(\frac{\sqrt{\beta} \; n^{\frac{1}{4}}}{\sqrt{\Delta} \; \log{n}} \right) \,. $ 
\end{corollary}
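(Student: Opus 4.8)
The plan is to obtain \cref{thm:expansion} as an immediate consequence of \cref{thm:main_result} together with the congestion estimate \eqref{eq:congestion_inequality}; no additional quantum-query machinery is required. First I would recall that \cref{thm:main_result} gives $Q(G) = \Omega\bigl(n^{3/4}/\sqrt{g}\bigr)$ for every connected graph $G$ on $n$ vertices, where $g$ is the vertex congestion. The hypotheses of the corollary enter only through a bound on $g$ in terms of $\beta$ and $\Delta$.

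Next I would invoke \eqref{eq:congestion_inequality}, namely $g \in O\bigl(n\log^2(n)\cdot \Delta/\beta\bigr)$, which holds for any $\beta$-expander of maximum degree $\Delta$ by routing an all-pairs set of paths along a near-uniform multicommodity flow (Lemma 16 of \cite{branzei2024sharp}). Taking square roots gives $\sqrt{g} = O\bigl(\sqrt{n}\,\log(n)\,\sqrt{\Delta/\beta}\bigr)$, and substituting into the bound from \cref{thm:main_result} yields
\[
Q(G) \;=\; \Omega\!\left(\frac{n^{3/4}}{\sqrt{g}}\right) \;=\; \Omega\!\left(\frac{n^{3/4}}{\sqrt{n}\,\log(n)\,\sqrt{\Delta/\beta}}\right) \;=\; \Omega\!\left(\frac{\sqrt{\beta}\;n^{1/4}}{\sqrt{\Delta}\;\log n}\right),
\]
which is exactly the claimed lower bound.

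The only real content is the congestion inequality \eqref{eq:congestion_inequality} itself; once it is granted, the corollary is a one-line asymptotic substitution. In a careful writeup I would therefore (i) double-check that the constants hidden in \eqref{eq:congestion_inequality} are independent of $n$, $\beta$, and $\Delta$, so that the chaining of $\Omega$'s and $O$'s above is legitimate, and (ii) note that the expander condition is only assumed for sets $S$ with $|S| \le n/2$, which is precisely what the flow construction behind \eqref{eq:congestion_inequality} needs. I do not anticipate any genuine obstacle here beyond this bookkeeping, since the heavy lifting is entirely in \cref{thm:main_result}.
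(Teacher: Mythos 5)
Your proposal is correct and is exactly the paper's argument: the paper derives \cref{thm:expansion} by substituting the congestion bound \eqref{eq:congestion_inequality} into the lower bound of \cref{thm:main_result}, precisely as you do. The direction of the inequality also works out as you claim, since an upper bound on $g$ strengthens the lower bound $\Omega\bigl(n^{3/4}/\sqrt{g}\bigr)$.
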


On constant degree expanders, \cref{thm:main_result} implies the following corollary:
\begin{corollary}
Let $G = (V, E)$ be a connected undirected $\Delta$-regular $\beta$-expander with $n$ vertices, where $\Delta$ and $\beta$ are constant. Then the quantum query complexity of local search on $G$ is $\Omega\left( \frac{n^{\frac{1}{4}}}{\sqrt{\log{n}}} \right)$.

\end{corollary}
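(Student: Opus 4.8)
The plan is to read off this corollary from \cref{thm:main_result} once we control the vertex congestion $g$ of a constant-degree expander. Plugging $\Delta,\beta=\Theta(1)$ into \eqref{eq:congestion_inequality} gives only $g\in O(n\log^2 n)$, hence a bound of $\Omega(n^{1/4}/\log n)$; to obtain the stated $\Omega(n^{1/4}/\sqrt{\log n})$ I would first establish the sharper estimate $g\in O(n\log n)$ for $\Delta$-regular $\beta$-expanders with $\Delta,\beta=\Theta(1)$, and then substitute into \cref{thm:main_result}: $n^{3/4}/\sqrt{g}=\Omega\!\left(n^{3/4}/\sqrt{n\log n}\right)=\Omega\!\left(n^{1/4}/\sqrt{\log n}\right)$.

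To get $g\in O(n\log n)$ I would exhibit an all-pairs path system with vertex congestion $O(n\log n)$. The cleanest route is via rapid mixing: on a constant-degree $\beta$-expander the lazy random walk mixes in $\ell=\Theta(\log n)$ steps, so the walk from any $u$ sits at any fixed target $v$ at time $\ell$ with probability $\Theta(1/n)$. Take $P^{u,v}$ to be a length-$\ell$ walk from $u$ conditioned to be at $v$ at step $\ell$, with repeated vertices shortcut out so that $P^{u,v}$ is a genuine path of length at most $\ell$. For a fixed vertex $w$, a Bayes-rule computation (using reversibility and $\Pr[X_\ell=v]=\Theta(1/n)$) shows that in expectation $O\!\left(n^2\cdot \ell/n\right)=O(n\log n)$ of the $n^2$ chosen paths pass through $w$; since the path choices are independent across pairs, a Chernoff bound followed by a union bound over the $n$ vertices shows that with positive probability every vertex simultaneously has congestion $O(n\log n)$, so such a path system exists. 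Equivalently, one may cite the standard fact that a bounded-degree expander supports a unit all-pairs concurrent flow with edge congestion $O(n\log n)$ --- a consequence of the $O(\log n)$ flow--cut gap together with the sparsest cut being $\Theta(1)$ --- and then round the flow to integral paths and pass from edge congestion to vertex congestion at the cost of a constant factor $\Delta$.

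With $g\in O(n\log n)$ in hand, the corollary follows by invoking \cref{thm:main_result}, whose only hypothesis --- connectedness --- holds since $G$ is an expander. The step I expect to be the crux is the congestion estimate: one must check that conditioning the walk on its endpoint does not inflate the intermediate occupation probabilities much beyond $O(1/n)$ (it does not, because the endpoint carries probability $\Theta(1/n)$ and the walk is reversible, so only the $O(\log n)$ time steps near the two ends need separate, crude handling), and that shortcutting repeated vertices can only decrease congestion. Everything after that is a one-line substitution into \cref{thm:main_result}.
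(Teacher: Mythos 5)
Your proposal is correct and follows essentially the same route as the paper: both arguments reduce the corollary to the sharper congestion estimate $g\in O(n\log n)$ for constant-degree expanders (correctly noting that the generic bound \eqref{eq:congestion_inequality} would only yield $\Omega(n^{1/4}/\log n)$) and then substitute into \cref{thm:main_result}. The only difference is that the paper simply cites Corollary D.1 of \cite{branzei2024sharp} for $g\in O(n\log n)$, whereas you sketch a self-contained derivation via rapidly mixing walks (or the flow--cut gap); your sketch is the standard argument behind that cited fact and is sound.
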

\begin{proof}
By Corollary D.1 in \cite{branzei2024sharp}, such graphs have congestion $g \in \cO(n \log(n))$. Therefore by \cref{thm:main_result}, the quantum query complexity on every such graph is $\Omega\bigl( \frac{n^{\frac{1}{4}}}{\sqrt{\log{n}}} \bigr)$.
\end{proof}

We compare this bound to the quantum extension of Aldous' upper bound detailed in \cite{Aaronson06}, which uses Grover's search to obtain an improved quantum algorithm for local search.


\medskip  

\noindent \textbf{Theorem A.} [\cite{Aaronson06}, Theorem $3.2$] 
\emph{Let $G = (V,E)$ be a connected undirected graph with $n$ vertices and maximum degree $\Delta$. Then the quantum query complexity of local search on $G$ is $\cO\left( n^{\frac{1}{3}} \Delta^{\frac{1}{6}} \right)$.}


For constant degree expanders, this bound simplifies to $\cO(n^{\frac{1}{3}})$, leaving a polynomial gap between the upper and lower bounds for quantum local search on such graphs. This  is in contrast to the classical case where the exponent is known to be $1/2$. 
It would be interesting to close this gap.

\section{Preliminaries} \label{sec:preliminaries}

\subsection{The strong weighted adversary method} 

A powerful method for giving lower bounds in the quantum setting is the relational adversary method \cite{Ambainis02}. Several variants of the method exist, such as the strong weighted adversary \cite{Zhang05}. \cite{SS05} showed that multiple quantum relational adversary methods are in fact equivalent.

\smallskip 

We use the strong weighted adversary method from \cite{Zhang05}, formally stated next.


\smallskip  

\noindent \textbf{Theorem B.} [The strong weighted adversary method~\cite{Zhang05}]
\emph{Let $A,B$ be  finite sets and  $\mathcal{X} \subseteq B^A$  a set of functions mapping $A$ to $B$. Let $\mathcal{H} : \mathcal{X} \to \{0,1\}$ be a map that assigns to each function in $\cX$ the label  $0$ or $1$.  We are given oracle access to a function $F \in \cX$ and the problem is to compute the label $\mathcal{H}(F)$ using as few queries as possible.}

\emph{Let $r : \cX \times \cX \to \mathbbm{R}_{\ge 0}$ be a  non-zero symmetric function of our choice 
with $r(F_1,F_2) = 0$ when $\cH(F_1) = \cH(F_2)$.
Additionally, define}
\begin{itemize}
\item \emph{${r}' : \cX \times \cX \times A \to \mathbbm{R}_{\ge 0}$ such that  ${r}'(F_1,F_2,a) \cdot  {r}'(F_2,F_1,a) \ge r^2(F_1,F_2)$ for all $F_1, F_2 \in \cX$ and all $a \in A$ with $F_1(a) \ne F_2(a)$.}
\item \emph{$M : \cX \to \mathbb{R}$ such that $M(F_1) = \sum_{F_2 \in \cX} r(F_1,F_2)$ for all   $F_1 \in \cX$. } 
\item \emph{$\nu : \cX \times A \to \mathbb{R}$ such that $\nu(F_1,a) = \sum_{F_2 \in \cX} {r}'(F_1,F_2,a)$ for all   $F_1 \in \cX$ and all $a \in A$.}
\end{itemize}
\emph{Then the quantum query complexity of the problem is at least:}
\begin{align}
    \min_{\substack{F_1,F_2 \in \cX,\; a \in A:\\ r(F_1,F_2) > 0\\ F_1(a) \ne F_2(a)}} {\left( \frac{M(F_1)M(F_2)}{\nu(F_1,a)\nu(F_2,a)}\right)^{\frac{1}{2}} }\,.
\end{align}


\subsection{Setup}

Next we explain the construction from \cite{branzei2024sharp} that we use.
Since the graph $G$ has vertex congestion $g$, there is an all-pairs set of paths $\mathcal{P} = \left\{P^{u,v}\right\}_{u,v \in [n]}$ with vertex congestion $g$ \footnote{That is, each vertex is used at most $g$ times across all the paths.}.  For each $u,v \in V$, let $dist(u,v)$ be the length of the shortest path from $u$ to $v$.


Let $L \in [n]$, with $L \geq 2$, be a parameter that we set later. Given a sequence of $k$ vertices $\X = (x_1, \ldots, x_k)$, we write $\X_{1 \to j}=(x_1,\ldots,x_j)$ to refer to a prefix of the sequence, for an index $j \in [k]$.
Given a walk $Q = (v_1, \ldots, v_k)$ in $G$, let $Q_i$ refer to the $i$-th vertex in the walk  (i.e. $Q_i = v_i$). For each vertex $u \in [n]$, let $\multiplicity(Q,u)$ be the number of times that vertex $u$ appears in $Q$.

\begin{mydefinition}[Staircase]
\label{def:staircase}
Given a sequence  $\X = (x_1, \ldots, x_{k})$ of vertices in  $G$, a \emph{staircase} induced by $\X$ is a walk $S_{\X} = S_{\X,1} \circ \ldots \circ S_{\X, k-1}$, where each $S_{\X, i}$ is a path in $G$ starting at vertex $x_{i}$ and ending at $x_{i+1}$. 
Each vertex $x_i$ is called a \emph{milestone} and each path $S_{\X,i}$ a \emph{quasi-segment}.

The staircase $S_{\vec{x}}$ is said to be \emph{induced by   $\vec{x}$ and  $\mathcal{P} = \left\{P^{u,v}\right\}_{u,v \in [n]}$} if additionally  $S_{\X,i} = P^{x_i, x_{i+1}}$ for all $i \in [k-1]$.
\end{mydefinition}

 \begin{mydefinition}[Tail of a staircase] \label{def:tail}
    Let $S_{\vec{x}} = S_{\vec{x},1} \circ \ldots \circ S_{\vec{x}, k-1}$  be a staircase induced  by some sequence $\X  = (x_1, \ldots, x_k) \in [n]^k$.
    For each $j \in [k-1]$, let $T = S_{\vec{x},j} \circ \ldots \circ S_{\vec{x}, k-1}$.
    Then 
   $Tail(j, S_{\X})$ is obtained from $T$ by removing the first occurrence of $x_j$ in $T$ (and only the first occurrence).
   We also define $Tail(k, S_{\X})$  to be the empty sequence. 
\end{mydefinition}

Next we define a set of functions $\mathcal{X}$.

\begin{mydefinition}[The functions $f_{\vec{x}}$ and $g_{\vec{x},b}$; the  set $\mathcal{X}$]
\label{def:mathcal_X}
 Suppose $\mathcal{P} =  \{P^{u,v} \}_{u,v \in [n]} $ is an all-pairs set  of  paths in $G$. For each sequence  of vertices  $\X \in \{1\} \times [n]^{L}$, define a function \\ $f_{\X} : [n] \to  \{-n^2-n, \ldots, 0,\ldots,n\}$  such that for each $v \in [n]$:
    \begin{itemize}
        \item If $v \notin S_{\X}$, then  set $f_{\X}(v) = dist(v,1)$, where $S_{\X}$ is the staircase induced by $\X$ and $\cP$.
        \item If $v \in S_{\X}$, then set $f_{\X}(v) = -i\cdot n - j$, where $i$ is the maximum index with $v \in P^{x_i,x_{i+1}}$, and $v$ is the $j$-th vertex in $P^{x_i,x_{i+1}}$.
    \end{itemize}
Also, for each $\X \in \{1\} \times [n]^L$ and $b \in \{0,1\}$, let $g_{\X,b} : [n] \to \{-n^2-n, \ldots, 0,\ldots,n\} \times \{-1,0,1\}$ be such that, for all $v \in [n]$:
\begin{align} 
    g_{\X,b}(v) = \begin{cases}
            \bigl(f_{\X}(v), b\bigr) & \text{if} \; v = x_{L+1}\\
            \bigl(f_{\X}(v), -1\bigr) & \text{if}  \; v \ne x_{L+1}
        \end{cases} \,. \label{eq:def:g_X_b}
\end{align} 
Let $\mathcal{X} = \Bigl\{g_{\X,b} \mid  \X \in \{1\} \times [n]^{L} \mbox{ and }  b \in \{0,1\}\Bigr\}$.
\end{mydefinition}


For each $\vec{x} \in \{1\} \times [n]^L$, the associated 
 function $f_{\vec{x}}$ has a unique local minimum, at $x_{L+1}$ (the proof follows by   Lemma 6 and Lemma 7 in the full version\footnote{The full version of \cite{branzei2024sharp} can be found at \url{https://arxiv.org/pdf/2305.08269}.} of \cite{branzei2024sharp}).

The strong weighted adversary method gives lower bounds on the quantum query complexity of Boolean functions, so we turn the local search question into a decision problem by defining a map $\mathcal{H}$ that assigns a bit to each function in $\mathcal{X}$. 

 \begin{mydefinition}[The map $\mathcal{H}$] \label{def:map_H}
 Suppose $\mathcal{P} =  \{P^{u,v} \}_{u,v \in [n]} $ is an all-pairs set  of  paths in $G$ and $\mathcal{X}$ is the set of functions $g_{\vec{x},b}$ from Definition~\ref{def:mathcal_X}. Define $\mathcal{H} : \mathcal{X} \to \{0,1\}$ as 
 \[
\mathcal{H}(g_{\vec{x},b}) = b \qquad  \forall \vec{x} \in \{1\} \times [n]^{L} \; \text{and} \; b \in \{0,1\} \,. 
\]
 \end{mydefinition}

Thus the decision problem is: given a graph $G$  and oracle access to a function $g_{\vec{x},b} \in \mathcal{X}$, return the value $\mathcal{H}(g_{\vec{x},b}) =b$. This means: find the hidden bit, which only exists at the vertex representing the  local minimum of $f_{\vec{x}}$.
Measuring the query complexity of this decision problem will give  the answer for local search, as the next two problems have query complexity within additive $1$:

\smallskip 

$\bullet \; \;$ \emph{search problem:} given oracle access to a function $f_{\vec{x}}$,  find a vertex $v$ that is a local minimum; 

\smallskip 

$\bullet \; \;$ \emph{decision problem:}  given oracle access to the  function $g_{\vec{x},b}$, find $\mathcal{H}(g_{\vec{x},b})$. 

\section{Proof of Theorem \ref{thm:main_result}}

In this section we prove Theorem~\ref{thm:main_result}.

\mainresult*
\begin{proof}
Let $L = \sqrt{n}$. The plan is to invoke Theorem B using the set of functions  $\mathcal{X}$ from \cref{def:mathcal_X}, the map $\mathcal{H}$ from \cref{def:map_H}, and the functions (aka relations) $r, r'$ that we define next.

By definition, for each function  $g_{\vec{x},b} \in \mathcal{X}$, its underlying  staircase $S_{\vec{x}}$ has $L+1$ milestones and $L$ quasi-segments.
For all sequences of milestones $\X,\Y \in \{1\} \times [n]^L$, let $J_{\X,\Y} = \max\{i : \X_{1 \to i} = \Y_{1 \to i} \}$; i.e. the number of milestones in the shared prefix of $\X$ and $\Y$. 

We borrow the relation function defined in \cite{branzei2024sharp} as a component $r^*$ of our relation  $r$. First, we say that a sequence of vertices $\X = (x_1, \ldots, x_{L+1})$ is  \emph{good} if  $x_i \ne x_j$ for all $i,j \in [L+1]$ with $ i < j$; otherwise, $\X$ is \emph{bad}.
That is, a good $\X$ only involves distinct milestones.
For each bit $b \in \{0,1\}$, the function $g_{\X,b} \in \mathcal{X}$ is \emph{good} if $\X$ is good, and \emph{bad} otherwise.

Let $r^* : \mathcal{X} \times \mathcal{X} \to \mathbb{R}_{\ge 0}$  be a symmetric function such that for each \mbox{$\X, \Y \in \{1\} \times [n]^L$} and $b_1,b_2 \in \{0,1\}$, we have:
\[
    r^*(g_{\X,b_1}, g_{\Y,b_2}) = \begin{cases}
        0 & \text{ if at least one of the following holds: } b_1 = b_2 \text{ or } \X \text{ is bad or } \Y \text{ is bad.}\\
        n^j & \text{ otherwise, where } j \text{ is the maximum index for which } \X_{1 \to j} = \Y_{1 \to j}\,. 
    \end{cases}
\]

Let $r : \mathcal{X} \times \mathcal{X} \to \mathbb{R}_{\ge 0}$ be:
\begin{align}
    r(g_{\X,b_1},g_{\Y,b_2}) = \begin{cases}
        0 & \text{if } \X = \Y\\
        r^*(g_{\X,b_1},g_{\Y,b_2}) & \text{otherwise}
    \end{cases}
\end{align}

Next we define  $r' : \mathcal{X} \times \mathcal{X} \times [n] \to \mathbb{R}_{\ge 0}$. For each pair of functions  $g_{\X,b_1},g_{\Y,b_2} \in \cX$ and  each vertex $v \in [n]$, let 
\begin{align*}
    r'(g_{\X,b_1},g_{\Y,b_2},v) = \begin{cases}
        0 & \text{if } g_{\X,b_1}(v) = g_{\Y,b_2}(v) \text{ or } \cH(g_{\X,b_1}) = \cH(g_{\Y,b_2})\\
        r(g_{\X,b_1},g_{\Y,b_2})\cdot g/n^{1.5} &
        \text{otherwise, if }
        v \in Tail(J_{\X,\Y}, S_{\X}) \text{ and } v \notin Tail(J_{\X,\Y},S_{\Y})\\
        r(g_{\X,b_1},g_{\Y,b_2}) \cdot n^{1.5}/g &
        \text{otherwise, if }
        v \in Tail(J_{\X,\Y}, S_{\Y}) \text{ and } v \notin Tail(J_{\X,\Y},S_{\X})\\
        r(g_{\X,b_1},g_{\Y,b_2}) & \text{otherwise}
    \end{cases}
\end{align*}
The functions $r$ and $r'$ constitute a weight scheme for use with the strong weighted adversary method.
We can now invoke Theorem B with parameters $\mathcal{X}$, $\mathcal{H}$, $r$, and $r'$, resulting in a quantum lower bound of 
\begin{align} \label{eq:lb_obtained_abstract}
    \min_{\substack{F_1,F_2 \in \cX,\; a \in [n]:\\ r(F_1,F_2) > 0\\ F_1(a) \ne F_2(a)}} {\left( \frac{M(F_1)M(F_2)}{\nu(F_1,a)\nu(F_2,a)}\right)^{\frac{1}{2}} },
\end{align}
where $M$ and $\nu$ are functions of $r$ and $r'$ as defined in Theorem B. Next we estimate \eqref{eq:lb_obtained_abstract}.

Take an arbitrary choice of $g_{\X,b_1},g_{\Y,b_2} \in \cX$ and $v \in V$ such that $r(g_{\X,b_1},g_{\Y,b_2}) > 0$ and \mbox{$g_{\X,b_1}(v) \ne g_{\Y,b_2}(v)$}.
Since $r(g_{\X,b_1},g_{\Y,b_2}) > 0$, we have that both $g_{\X,b_1}$ and $g_{\Y,b_2}$ are good.
Therefore by \cref{lem:M_lb} we have 
\begin{align} \label{eq:lb_M_g_x_b_and_g_y_b}
M(g_{\X,b_1}) \ge \frac{1}{2e} L n^{L+1} \; \; \mbox{and} \; \; M(g_{\Y,b_2}) \ge \frac{1}{2e} L n^{L+1} \,.
\end{align}

We can now bound $\nu$ by considering two cases.

\begin{description}
\item[{Case 1: $v \in Tail(J_{\X,\Y},S_{\X})$.}] By definition of $\nu(g_{\X,b_1},v)$ and $\mathcal{X}$, we have 
\begin{align} \label{eq:case_1_using_def_nu}
    \nu(g_{\X,b_1},v) &=  \sum_{{g_{\Y,b_2} \in \cX}} r'(g_{\X,b_1},g_{\Y,b_2},v) = \sum_{j = 1}^{L+1} \sum_{\substack{g_{\Y,b_2} \in \cX:\\J_{\X,\Y} = j}} r'(g_{\X,b_1},g_{\Y,b_2},v)\,.
\end{align}

Since the family of functions $\mathcal{X}$ contains exactly one function $g_{\Y,b_2}$ with $J_{\X,\Y} = L+1$ and $b_1 \ne b_2$, we can continue from  \eqref{eq:case_1_using_def_nu} by rewriting $\nu(g_{\X,b_1},v)$ as 
    \begin{align}
           \nu(g_{\X,b_1},v) &= n^{L+1} + \sum_{j=1}^L \sum_{\substack{g_{\Y,b_2} \in \cX:\\J_{\X,\Y} = j}} r'(g_{\X,b_1},g_{\Y,b_2},v) \notag \\
           & = n^{L+1} + \sum_{j=1}^L \sum_{\substack{g_{\Y,b_2} \in \cX:\\J_{\X,\Y} = j\\v \in Tail(j,\, S_{\Y})}} n^j + \sum_{j=1}^L \sum_{\substack{g_{\Y,b_2} \in \cX:\\J_{\X,\Y} = j\\v \notin Tail(j,\, S_{\Y})}} n^j \cdot g/n^{1.5} \,. \label{eq:case_1_rewriting_nu_further}
    \end{align} 

Lemma~\ref{lem:sum_first_part_both_cases} gives 
\begin{align} \label{eq:case_1_ub_first_part}
 \sum_{j=1}^L   \sum_{\substack{g_{\Y,b_2} \in \cX:\\J_{\X,\Y} = j\\v \in Tail(j,\, S_{\Y})}} n^j \leq g n^L + L^2 g n^{L-1}  \,. 
\end{align}

Meanwhile,  
\begin{align}
    \big| \left\{ \vec{y} \in \{1\} \times [n]^L \mid J_{\vec{x}, \vec{y}} = j, v \not \in Tail(j, S_{\vec{y}}) \right\} \bigr| \leq  \big| \left\{ \vec{y} \in \{1\} \times [n]^L \mid J_{\vec{x}, \vec{y}} = j \right\} \bigr|  \leq  n^{L+1-j},
\end{align}
and so 
\begin{align} \label{eq:case_1_ub_second_part}
    \sum_{\substack{g_{\Y,b_2} \in \cX:\\J_{\X,\Y} = j\\v \notin Tail(j,\, S_{\Y})}} n^j \cdot g/n^{1.5} \leq n^{L+1-j} \cdot n^j \cdot g/n^{1.5} = n^{L+1} \cdot g/n^{1.5} \,.
\end{align}

Using inequalities \eqref{eq:case_1_ub_first_part} and \eqref{eq:case_1_ub_second_part} in \eqref{eq:case_1_rewriting_nu_further}, we obtain 
\begin{align}
\nu(g_{\X,b_1},v) &\le n^{L+1} +
                \bigl( g n^L + L^2 g n^{L-1} \bigr) +
                \sum_{j=1}^L n^{L+1} \cdot g/n^{1.5} \notag \\
                & = n^{L+1} + g n^L + L^2 g n^{L-1} + L n^{L+1} g/n^{1.5}  \,. \label{eq:case_1_almost_final_bound_nu}
    \end{align}
 Since $g \in [n, n^2]$ for any graph $G$ and $L = \sqrt{n}$, inequality \eqref{eq:case_1_almost_final_bound_nu} gives  
\begin{align} \label{eq:case_1_final_ub_nu}
            \nu(g_{\X,b_1},v) &\le 4gn^L \,.
    \end{align}
\item[{Case 2: $v \notin Tail(J_{\X,\Y}, S_{\X})$.}] By definition of $\nu(g_{\X,b_1},v)$ and $\mathcal{X}$, we have 
\begin{align} \label{eq:case_2_nu_initial}
\nu(g_{\X,b_1},v) &= \sum_{j = 1}^{L+1} \sum_{\substack{g_{\Y,b_2} \in \cX:\\J_{\X,\Y} = j}} r'(g_{\X,b_1},g_{\Y,b_2},v) \,.
\end{align}
Since there is exactly one function $g_{\Y,b_2} \in \mathcal{X}$ with $J_{\X,\Y} = L+1$ and $b_1 \ne b_2$, we can further decompose the right hand side of \eqref{eq:case_2_nu_initial} as 
\begin{align} \label{eq:case_2_nu_second}
    \nu(g_{\X,b_1},v) & = n^{L+1} + \sum_{j=1}^L \sum_{\substack{g_{\Y,b_2} \in \cX:\\J_{\X,\Y} = j}} r'(g_{\X,b_1},g_{\Y,b_2},v) \,.
\end{align}
Since we chose $g_{\X,b_1}, g_{\Y,b_2}, v$ such that $g_{\X,b_1}(v) \ne g_{\Y,b_2}(v)$, and we are in the case $v \notin Tail(J_{\X,\Y}, S_{\X}$), it follows that $v \in Tail(J_{\X,\Y}, S_{\Y})$. Therefore, we can rewrite  \eqref{eq:case_2_nu_second} as 
    \begin{align}
        \nu(g_{\X,b_1},v) 
            &= n^{L+1} + \sum_{j=1}^L \sum_{\substack{g_{\Y,b_2} \in \cX:\\J_{\X,\Y} = j\\v \in Tail(j,\, S_{\Y})}} n^j\cdot n^{1.5}/g \,. \label{eq:case_2_some_identity} 
    \end{align}
Using     Lemma~\ref{lem:sum_first_part_both_cases} in  \eqref{eq:case_2_some_identity}  gives 
\begin{align}
         \nu(g_{\X,b_1},v)   &\le n^{L+1} + \left( gn^L  + L^2 g n^{L-1} \right) \cdot n^{1.5}/g  \,. 
                \label{eq:case_2_first_ub}
\end{align}
Since $L = \sqrt{n}$, inequality \eqref{eq:case_2_first_ub} gives 
\begin{align} \label{eq:case_2_final_ub_nu}
    \nu(g_{\X,b_1},v)  \leq 3n^{L+1.5} \,.
\end{align}
This completes case 2.
\end{description}
Combining inequality \eqref{eq:case_1_final_ub_nu} from case 1 and inequality \eqref{eq:case_2_final_ub_nu} from case 2, we obtain 
\begin{align} \label{eq:nu_upper_bound}
    \nu(g_{\X,b_1},v)  \leq \max \left\{ 4gn^L, 3n^{L+1.5}  \right\} \,.
\end{align}

Because $g_{\X,b_1}(v) \ne g_{\Y,b_2}(v)$ and $r(g_{\X,b_1},g_{\Y,b_2}) > 0$, we have $v \in Tail(J_{X,Y}, S_{\X}) \cup Tail(J_{X,Y}, S_{\Y})$. Then:
\begin{enumerate}[(a)]
\item If $v \in Tail(J_{X,Y}, S_{\X})$,  by case $1$ applied to $\vec{x}$ we have $\nu(g_{\X,b_1},v)  \leq 4gn^L$.
\item If $v \in Tail(J_{X,Y}, S_{\Y})$,  by case $1$ applied to $\vec{y}$ we have $\nu(g_{\Y,b_2},v)  \leq 4gn^L$. 
\end{enumerate}
Combining (a-b) with \eqref{eq:nu_upper_bound} gives 
\begin{align} \label{eq:nu_product_ineq}
\nu(g_{\X,b_1},v)  \nu(g_{\Y,b_2},v) \leq 4gn^L \cdot \max \left\{ 4gn^L, 3n^{L+1.5}  \right\} \,. 
\end{align}
Then  inequalities \eqref{eq:lb_M_g_x_b_and_g_y_b} and  \eqref{eq:nu_product_ineq} imply 
\begin{align}
    \sqrt{\frac{M(g_{\X,b_1})M(g_{\Y,b_2})}{\nu(g_{\X,b_1},v) \nu(g_{\Y,b_2},v) }} &\ge \sqrt{\frac{\frac{1}{4e^2} L^2 n^{2L+2}}{4gn^L\cdot \max\{4gn^L, 3n^{L+1.5}\}}}\\
    &\ge \sqrt{\frac{1}{64e^2} \cdot \min\left\{\frac{n^3}{g^2}, \frac{n^{1.5}}{g}\right\} }\\
    &= \frac{1}{8e} \cdot \min\left\{\frac{n^{1.5}}{g}, \frac{n^{0.75}}{\sqrt{g}}\right\} 
    \label{eq:almost_final_bound_overall}
\end{align}
If $g \ge n^{1.5}$, then the  bound claimed in the theorem statement is less than or equal to $\Omega(1)$, which is trivially true.
Otherwise, $g < n^{1.5}$, in which case the bound in \eqref{eq:almost_final_bound_overall} simplifies to:
\begin{align}
    \sqrt{\frac{M(g_{\X,b_1})M(g_{\Y,b_2})}{\nu(g_{\X,b_1},v) \nu(g_{\Y,b_2},v) }} \ge \frac{1}{8e} \cdot \frac{n^{0.75}}{\sqrt{g}} \label{eq:final_bound_this_proof}
\end{align}
Since \eqref{eq:final_bound_this_proof} holds for arbitrary choices of $g_{\X,b_1}, g_{\Y,b_2}, v$ such that $r(g_{\X,b_1} , g_{\Y,b_2}) > 0$ and \mbox{$g_{\X,b_1}(v) \ne g_{\Y,b_2}(v)$},  Theorem B implies that the quantum query complexity of local search is $\Omega\left(\frac{n^{0.75}}{\sqrt{g}}\right)$ as required.
\end{proof}

\begin{lemma} \label{lem:sum_first_part_both_cases}
In the setting of Theorem 1, let $g_{\X,b_1},g_{\Y,b_2} \in \cX$ and $v \in V$ be such that $r(g_{\X,b_1},g_{\Y,b_2}) > 0$ and \mbox{$g_{\X,b_1}(v) \ne g_{\Y,b_2}(v)$}. Then 
    \begin{align} \label{eq:sum_first_part_both_cases}
  \sum_{j=1}^L  \sum_{\substack{g_{\Y,b_2} \in \cX:\\J_{\X,\Y} = j\\v \in Tail(j,\, S_{\Y})}} n^j \leq  g n^L + L^2 g n^{L-1} \,. 
\end{align}
\end{lemma}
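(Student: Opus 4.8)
The plan is to treat \eqref{eq:sum_first_part_both_cases} as a weighted counting estimate over the sequences $\Y \in \{1\} \times [n]^L$. Fix the data $g_{\X,b_1}$ and $v$. Since $r(g_{\X,b_1},g_{\Y,b_2}) > 0$, the definitions of $r$ and $r^*$ force $\X$ to be \emph{good}, i.e. its milestones $x_1,\ldots,x_{L+1}$ are pairwise distinct; this is the only consequence of the hypotheses I will use. I also record the single fact about $\cP$ that drives the bound: since $\cP$ has vertex congestion $g$ and each path in $\cP$ is simple, $v$ lies on at most $g$ of the paths $P^{a,b}$; equivalently $\bigl|\{(a,b)\in[n]^2 : v \in P^{a,b}\}\bigr| \le g$.

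First I would fix $j$ and describe the sequences being summed. The constraint $J_{\X,\Y} = j$ means exactly $y_1 = x_1,\ldots,y_j = x_j$ and $y_{j+1}\ne x_{j+1}$, while $y_{j+1},\ldots,y_{L+1}$ are otherwise free. Since $Tail(j,S_{\Y})$ is a subwalk of $S_{\Y,j}\circ\cdots\circ S_{\Y,L} = P^{y_j,y_{j+1}}\circ\cdots\circ P^{y_L,y_{L+1}}$, the condition $v \in Tail(j,S_{\Y})$ implies $v \in P^{y_i,y_{i+1}}$ for some $i \in \{j,\ldots,L\}$. I split the relevant $\Y$ into two types. \textbf{Type A:} $v \in P^{y_i,y_{i+1}}$ for some $i \in \{j+1,\ldots,L\}$. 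For each such $i$, the pair $(y_i,y_{i+1})$ is one of at most $g$ pairs whose path contains $v$, and the remaining $L-1-j$ coordinates range over $[n]$, so there are at most $\sum_{i=j+1}^{L} g\,n^{L-1-j} \le L\,g\,n^{L-1-j}$ such $\Y$; weighting by $n^j$ and summing $j$ from $1$ to $L$ gives a contribution of at most $L^2 g n^{L-1}$. \textbf{Type B:} the complement, where the membership above necessarily holds with $i=j$, i.e. $v \in P^{x_j,y_{j+1}}$ (using $y_j=x_j$). Writing $S_j = \{w \in [n] : v \in P^{x_j,w}\}$, we have $y_{j+1}\in S_j$ and $y_{j+2},\ldots,y_{L+1}$ free, so there are at most $|S_j|\,n^{L-j}$ such $\Y$, contributing after weighting at most $\sum_{j=1}^{L} n^j\,|S_j|\,n^{L-j} = n^L \sum_{j=1}^{L}|S_j|$.

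The crux is bounding $\sum_{j=1}^{L}|S_j|$, and here I would use that $\X$ is good: the sets $\{(x_j,w) : w \in S_j\}$, for $j=1,\ldots,L$, have pairwise distinct first coordinates and hence are disjoint, and their union lies inside $\{(a,b) : v \in P^{a,b}\}$, which has size at most $g$. Therefore $\sum_{j=1}^{L}|S_j| \le g$, so the Type B contribution is at most $g n^L$, and adding the two types yields the claimed bound $g n^L + L^2 g n^{L-1}$.

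The routine parts are the boundary cases (for $j=L$ there are no Type A sequences, and if $v$ equals some milestone $x_j$ the first-occurrence removal in the definition of $Tail$ forces that $\Y$ into Type A), which change nothing. The one genuinely important step — the only place the argument is not a bare union bound — is the last paragraph: a naive per-$j$ estimate $|S_j|\le g$ would cost an extra factor of $L$ in the leading term, and it is precisely the distinctness of the milestones of $\X$ that lets the congestion budget $g$ be amortized across all the branch points $x_1,\ldots,x_L$ simultaneously.
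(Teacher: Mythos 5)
Your proof is correct and follows essentially the same route as the paper's: your quantity $|S_j|$ is exactly the paper's $\mathfrak{q}_v(x_j)$, your Type A/Type B split re-derives the per-$j$ count that the paper imports as a black box (Lemma 12 of \cite{branzei2024sharp}, i.e. $|T_j| \le \mathfrak{q}_v(x_j)\, n^{L-j} + L g n^{L-j-1}$), and the crucial amortization $\sum_{j=1}^L |S_j| \le g$ via the distinctness of the milestones of a good $\X$ is identical to the paper's final step. The only difference is that your argument is self-contained where the paper cites prior work.
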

\begin{proof}
Since the graph $G$ has vertex congestion $g$, there is an all-pairs set of paths 
$\mathcal{P} = \left\{P^{u,v}\right\}_{u,v \in [n]}$ with vertex congestion $g$.
For each $u,v \in [n]$, let $\numPaths{v}(u)$ be the number of paths in $\mathcal{P}$ that start at vertex $u$ and contain $v$:
    \begin{equation} \label{eq:numPaths}
        \numPaths{v}(u) = \Bigl|\{P^{u,w} \in \cP : w \in [n], v \in P^{u,w}\} \Bigr|  \,. 
    \end{equation}

For each $j \in [L+1]$, let 
\begin{align}
T_j = \left\{\Y \in \{1\} \times [n]^L \mid \max\{i : \X_{1 \to i} = \Y_{1 \to i} \} = j \text{ and } v \in Tail(j,S_{\Y}) \right\} \,.
\end{align}
By Lemma 12 in \cite{branzei2024sharp}, we have  
\begin{align} \label{eq:ub_T_j_size}
|T_j| \leq \numPaths{v}(x_j) \cdot n^{L - j} + L \cdot g \cdot n^{L - j - 1} \,.
\end{align}
Then for each $j \in [L]$,
\begin{align} \label{eq:common_case_initial_ineq}
\sum_{\substack{g_{\Y,b_2} \in \cX:\\J_{\X,\Y} = j\\v \in Tail(j,\, S_{\Y})}} n^j = |T_j| \cdot n^j \leq \left(\numPaths{v}(x_j) \cdot n^{L - j} + L \cdot g \cdot n^{L - j - 1} \right) n^j \,.
\end{align}
Since $r(g_{\vec{x}, b_1}, g_{\vec{y}, b_2}) > 0$, both $\vec{x}$ and $\vec{y}$ are good. Since $\vec{x}$ is good, it does not have repeated vertices, and so $\sum_{j=1}^L \mathfrak{q}_v(x_j) \leq g$, which combined with  \eqref{eq:common_case_initial_ineq} gives 
\begin{align}
\sum_{j=1}^L  \sum_{\substack{g_{\Y,b_2} \in \cX:\\J_{\X,\Y} = j\\v \in Tail(j,\, S_{\Y})}} n^j = \sum_{j=1}^L  |T_j| \cdot n^j \leq \sum_{j=1}^L \left(\numPaths{v}(x_j) \cdot n^{L - j} + L \cdot g \cdot n^{L - j - 1} \right) n^j \leq  g n^L + L^2 g n^{L-1} 
 \,.
\end{align}
This completes the proof.
\end{proof}

\begin{lemma} \label{lem:M_lb}
    Let $g_{\X,b_1} \in \cX$. If $g_{\X,b_1}$ is good, then $\sum_{g_{\Y,b_2} \in \cX} r(g_{\X,b_1},g_{\Y,b_2}) \ge \frac{1}{2e}\cdot L \cdot n^{L+1}$.
\end{lemma}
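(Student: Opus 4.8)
The plan is to expand $M(g_{\X,b_1})$ directly from the definitions and reduce it to a combinatorial count over good milestone sequences. Fix a good $g_{\X,b_1} \in \cX$, where $\X = (x_1,\dots,x_{L+1})$. By the definitions of $r$ and $r^*$, the term $r(g_{\X,b_1},g_{\Y,b_2})$ is nonzero only when $\Y$ is good, $\Y \neq \X$, and $b_2 \neq b_1$, in which case it equals $n^{J_{\X,\Y}}$. Since exactly one value $b_2 \in \{0,1\}$ satisfies $b_2 \neq b_1$, this yields
\[
   M(g_{\X,b_1}) \;=\; \sum_{\substack{\Y \in \{1\}\times[n]^L \text{ good}\\ \Y \neq \X}} n^{J_{\X,\Y}}.
\]

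First I would organize this sum by the agreement length $j = J_{\X,\Y}$. For $j \in \{1,\dots,L+1\}$, let $N_j$ be the number of good sequences $\Y \in \{1\}\times[n]^L$ with $y_i = x_i$ for all $i \le j$. Since $\X$ is good, $x_1,\dots,x_j$ are distinct, so such a $\Y$ is obtained by filling the remaining $L+1-j$ coordinates with distinct values avoiding $\{x_1,\dots,x_j\}$; hence $N_j = \prod_{i=j}^{L}(n-i)$, with $N_{L+1} = 1$. Every good $\Y \neq \X$ has $J_{\X,\Y} = j$ for a unique $j \in \{1,\dots,L\}$ (note $J_{\X,\Y}\ge 1$ because $x_1=y_1=1$), and the number of such $\Y$ is $N_j - N_{j+1} = (n-j-1)\prod_{i=j+1}^{L}(n-i)$. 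Therefore
\[
   M(g_{\X,b_1}) \;=\; \sum_{j=1}^{L} n^{j}\,(n-j-1)\prod_{i=j+1}^{L}(n-i).
\]

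Next I would lower-bound each term. The $j$-th summand is $n^j$ times a product of $L-j+1$ factors, each of the form $n - i$ with $i \le L+1$, hence each at least $n-L-1$; this gives $n^{j}(n-L-1)^{L-j+1} = n^{L+1}(1 - \frac{L+1}{n})^{L-j+1} \ge n^{L+1}(1-\frac{L+1}{n})^{L}$, using $1 \le L-j+1 \le L$ and $0 < 1-\frac{L+1}{n} < 1$. Summing over the $L$ admissible values of $j$,
\[
   M(g_{\X,b_1}) \;\ge\; L\, n^{L+1}\Bigl(1-\frac{L+1}{n}\Bigr)^{L}.
\]
Finally, with $L = \sqrt n$ the elementary bound $\ln(1-x) \ge -x - x^2$ for $x \le \frac12$ gives $(1-\frac{L+1}{n})^{L} \ge \exp\!\bigl(-\frac{L(L+1)}{n} - \frac{L(L+1)^2}{n^2}\bigr)$, whose exponent tends to $-1$ as $n \to \infty$; in particular this quantity is at least $\frac{1}{2e}$ once $n$ is large enough, and substituting finishes the proof.

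The computation is essentially bookkeeping, so I do not expect a genuine obstacle; the two places to be careful are (i) noticing that the sum over $b_2$ contributes a single term, so the ``factor of two'' one might expect does not arise, and (ii) the edge case $j=L$, where the product $\prod_{i=j+1}^{L}(n-i)$ is empty and the surviving factor is $n-L-1$ rather than $n-L$ — this is precisely why the clean form $(1-\frac{L+1}{n})^{L}$ appears. Identifying the count $N_j - N_{j+1}$ of good sequences at each agreement length is the one substantive step.
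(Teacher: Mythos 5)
Your proposal is correct, but it takes a genuinely different route from the paper. The paper's proof is a two-line reduction: it invokes \cref{lem:r_star_M_X_large} (Lemma 14 of \cite{branzei2024sharp}, cited as a black box) to get $\sum_{g_{\Y,b_2}} r^*(g_{\X,b_1},g_{\Y,b_2}) \ge \frac{1}{2e}(L+1)n^{L+1}$, and then subtracts the single diagonal term $r^*(g_{\X,b_1},g_{\X,1-b_1}) = n^{L+1}$ to pass from $r^*$ to $r$. You instead compute $M(g_{\X,b_1})$ from scratch: the exact count $N_j - N_{j+1} = (n-j-1)\prod_{i=j+1}^{L}(n-i)$ of good $\Y$ with agreement length exactly $j$, the per-term bound $n^j(n-L-1)^{L-j+1} \ge n^{L+1}(1-\frac{L+1}{n})^L$, and the estimate $(1-\frac{L+1}{n})^L \ge \frac{1}{2e}$ for $L = \sqrt{n}$ and $n$ large. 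In effect you re-prove the prior work's lemma inline, which makes your argument self-contained at the cost of redoing work the paper outsources. Two small remarks: your bound only holds for $n$ sufficiently large (harmless, since the lemma feeds an asymptotic $\Omega(\cdot)$ bound), and your direct computation actually sidesteps a slip in the paper's last inequality, where $\frac{1}{2e}(L+1)n^{L+1} - n^{L+1} \ge \frac{1}{2e}Ln^{L+1}$ would require $\frac{1}{2e} \ge 1$; the paper's constant should be adjusted (the quantity is still $\Omega(Ln^{L+1})$), whereas your derivation delivers the stated constant directly.
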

\begin{proof}
    The bulk of the work has already been done by \cref{lem:r_star_M_X_large}. 
    However, that lemma used $r^*$ whereas our function $r$ has value $0$ when the milestones are equal.
    Here we bridge that gap as follows:
    \begin{align*}
        \sum_{g_{\Y,b_2} \in \cX} r(g_{\X,b_1},g_{\Y,b_2}) &\ge 
            \sum_{g_{\Y,b_2} \in \cX} r^*(g_{\X,b_1},g_{\Y,b_2})
            - \sum_{\substack{g_{\Y,b_2} \in \cX\\ \Y = \X}} r^*(g_{\X,b_1},g_{\Y,b_2}) \explain{By definition of $r$.}\\
        &\ge \frac{1}{2e} \cdot (L+1) \cdot n^{L+1}
            - \sum_{\substack{g_{\Y,b_2} \in \cX\\ \Y = \X}} r^*(g_{\X,b_1},g_{\Y,b_2}) \explain{By \cref{lem:r_star_M_X_large}}\\
        &\ge \frac{1}{2e} \cdot (L+1) \cdot n^{L+1} - n^{L+1} \explain{By definition of $r^*$, since exactly one $\Y = \X$.}\\
        &\ge \frac{1}{2e} \cdot L \cdot n^{L+1} \,.
    \end{align*}
    This completes the proof.
\end{proof}

\section{Concluding Remarks}

There is an intriguing gap between the lower bound  $\Omega\left( \frac{n^{\frac{1}{4}}}{\sqrt{\log(n)}} \right)$ and upper bound  $O\left( n^{\frac{1}{3}} \right)$ for constant degree expanders. While the construction from \cite{branzei2024sharp} is optimal in the classical setting, no more efficient algorithm is known for expanders in the quantum case.

It remains an open question to understand the connection between the classical and quantum query complexity of local search. Existing results largely adapt classical constructions by finding suitable weight schemes for the quantum setting. Is there a natural relationship such that an improvement in one automatically translates to the other?

\bibliographystyle{alpha}
\bibliography{quantum_bib}

\begin{thebibliography}{BCR24}

\bibitem[Aar06]{Aaronson06}
Scott Aaronson.
\newblock Lower bounds for local search by quantum arguments.
\newblock {\em {SIAM} J. Comput.}, 35(4):804--824, 2006.

\bibitem[Ald83]{aldous1983minimization}
David Aldous.
\newblock Minimization algorithms and random walk on the $ d $-cube.
\newblock {\em The Annals of Probability}, 11(2):403--413, 1983.

\bibitem[Amb02]{Ambainis02}
Andris Ambainis.
\newblock Quantum lower bounds by quantum arguments.
\newblock {\em Journal of Computer and System Sciences}, 64(4):750--767, 2002.

\bibitem[AS04]{alon04}
Noga Alon and Joel~H. Spencer.
\newblock {\em The Probabilistic Method}.
\newblock Wiley, New York, second edition, 2004.

\bibitem[BCR24]{branzei2024sharp}
Simina Br{\^a}nzei, Davin Choo, and Nicholas Recker.
\newblock The sharp power law of local search on expanders.
\newblock In {\em Proceedings of the 2024 Annual ACM-SIAM Symposium on Discrete
  Algorithms (SODA)}, pages 1792--1809. SIAM, 2024.
\newblock Full version available at: \url{https://arxiv.org/abs/2305.08269}.

\bibitem[SS04]{santha2004quantum}
Miklos Santha and Mario Szegedy.
\newblock Quantum and classical query complexities of local search are
  polynomially related.
\newblock In {\em Proceedings of the thirty-sixth annual ACM symposium on
  Theory of computing}, pages 494--501, 2004.

\bibitem[{\v{S}}S05]{SS05}
Robert {\v{S}}palek and Mario Szegedy.
\newblock All quantum adversary methods are equivalent.
\newblock In {\em Automata, Languages and Programming}, pages 1299--1311.
  Springer Berlin Heidelberg, 2005.

\bibitem[SY09]{sun2009quantum}
Xiaoming Sun and Andrew Chi-Chih Yao.
\newblock On the quantum query complexity of local search in two and three
  dimensions.
\newblock {\em Algorithmica}, 55(3):576--600, 2009.

\bibitem[Ver06]{Verhoeven06}
Yves~F. Verhoeven.
\newblock Enhanced algorithms for local search.
\newblock {\em Information Processing Letters}, 97(5):171--176, 2006.

\bibitem[Zha05]{Zhang05}
Shengyu Zhang.
\newblock On the power of ambainis lower bounds.
\newblock {\em Theoretical Computer Science}, 339(2):241--256, 2005.

\bibitem[Zha09]{zhang2009tight}
Shengyu Zhang.
\newblock Tight bounds for randomized and quantum local search.
\newblock {\em SIAM Journal on Computing}, 39(3):948--977, 2009.

\end{thebibliography}

\appendix

\section{Additional facts  from prior work}

\begin{lemma} [\cite{branzei2024sharp}, Lemma 14] \label{lem:r_star_M_X_large}
Let $F_1 \in \mathcal{X}$. If $F_1$ is good, then 
\[ 
\sum_{F_2 \in \cX} r^*(F_1,F_2) \geq \frac{1}{2e} \cdot (L+1) \cdot n^{L+1}\,.
\]
\end{lemma}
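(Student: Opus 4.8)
The plan is to convert the weighted sum into a pure counting problem and then estimate that count via injective extensions of the milestone sequence. First I would unfold the definition of $r^*$: writing $F_1 = g_{\X,b_1}$ with $\X$ good, every term $r^*(F_1, g_{\Y,b_2})$ vanishes unless $b_2 \ne b_1$ and $\Y$ is good, and there is a single admissible bit $b_2 = 1 - b_1$. Hence
\[
\sum_{F_2 \in \cX} r^*(F_1, F_2) = \sum_{\substack{\Y \in \{1\}\times[n]^L \\ \Y \text{ good}}} n^{J_{\X,\Y}}.
\]
I would then group the right-hand side by the shared prefix length $j = J_{\X,\Y}$, which ranges over $\{1,\dots,L+1\}$ since $x_1 = y_1 = 1$ forces $j \ge 1$. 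Letting $N_j$ denote the number of good $\Y$ with $J_{\X,\Y} = j$, the sum becomes $\sum_{j=1}^{L+1} N_j\, n^j$, where the $j = L+1$ term counts only $\Y = \X$.

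Next I would lower bound $N_j$ for each $j$ by a direct greedy injective count. The condition $J_{\X,\Y} = j$ forces $y_i = x_i$ for $i \le j$ and $y_{j+1} \ne x_{j+1}$, while goodness of $\Y$ forces $y_{j+1}, \dots, y_{L+1}$ to be pairwise distinct and disjoint from $\{x_1,\dots,x_j\}$. Counting greedily, $y_{j+1}$ avoids the $j+1$ distinct values $x_1,\dots,x_{j+1}$, and each later coordinate avoids the $j$ fixed milestones plus at most $L$ previously placed entries; thus every one of the $L+1-j$ free coordinates has at least $n - L - 1$ choices, giving $N_j \ge (n-L-1)^{L+1-j}$.

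Finally I would assemble the bound. Substituting this estimate gives $N_j\, n^j \ge n^{L+1}\bigl(1 - \tfrac{L+1}{n}\bigr)^{L+1-j}$; summing over $j$ and reindexing by $k = L+1-j$ yields $n^{L+1}\sum_{k=0}^{L}\bigl(1-\tfrac{L+1}{n}\bigr)^{k}$, and bounding each of the $L+1$ terms below by the smallest one produces
\[
\sum_{F_2 \in \cX} r^*(F_1, F_2) \ge (L+1)\, n^{L+1}\Bigl(1 - \tfrac{L+1}{n}\Bigr)^{L}.
\]
It then remains to verify $\bigl(1 - \tfrac{L+1}{n}\bigr)^{L} \ge \tfrac{1}{2e}$: taking logarithms and using $\ln(1-x) \ge -x - x^2$ for small $x$, together with $L = \sqrt{n}$, makes the leading term $L\cdot\tfrac{L+1}{n} \to 1$ and the quadratic correction $\to 0$, so the quantity is at least $e^{-1-o(1)} \ge \tfrac{1}{2e}$ for $n$ large.

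The main obstacle is this final constant tracking: the factor $\tfrac{1}{2e}$ rather than $\tfrac{1}{e}$ is precisely the slack that absorbs the $o(1)$ in the exponent, and this is where the regime $L = \sqrt{n}$ (more generally $L = O(\sqrt{n})$) is essential, since for substantially larger $L$ the product $\bigl(1 - \tfrac{L+1}{n}\bigr)^L$ decays and the bound fails. A secondary subtlety is making the greedy injective count for $N_j$ fully rigorous, in particular confirming the uniform per-coordinate lower bound of $n - L - 1$ and the exact-prefix constraint $y_{j+1} \ne x_{j+1}$.
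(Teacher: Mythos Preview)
This lemma is not proved in the present paper; it appears in the appendix purely as a citation of Lemma~14 from \cite{branzei2024sharp} and is stated there without argument. So there is no in-paper proof to compare against, and your proposal is a standalone proof rather than a reconstruction.

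On its own merits your argument is correct and is the natural one. The reduction of $\sum_{F_2} r^*(F_1,F_2)$ to $\sum_{j=1}^{L+1} N_j\, n^j$ is exact, and the greedy lower bound $N_j \ge (n-L-1)^{L+1-j}$ is valid: when placing $y_{j+k}$ one forbids the $j$ fixed prefix milestones $x_1,\dots,x_j$, the $k-1$ already-chosen entries, and (only for $k=1$) the value $x_{j+1}$, for a total of at most $\max(j+1,L)\le L+1$ forbidden values, all distinct because $\X$ is good. The summation and the crude lower bound by $(L+1)$ copies of the smallest term are fine. The only point that needs care is the one you already flag: the inequality $\bigl(1-\tfrac{L+1}{n}\bigr)^L \ge \tfrac{1}{2e}$ is not true for all $L$, but does hold when $L=\sqrt{n}$ and $n$ is sufficiently large. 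Since the lemma is invoked here only inside the proof of Theorem~\ref{thm:main_result}, where $L=\sqrt{n}$ is fixed and the conclusion is asymptotic, this restriction is harmless.
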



\end{document}